\let\csname equation*\endcsname\relax
\let\csname endequation*\endcsname\relax
\newtheorem{lemma}{Lemma}
\newtheorem{proposition}{Proposition}
\newcommand\pp{\mathfrak{p}}
\newcommand\nn{\mathfrak{n}}
\newcommand\PP{\mathfrak{P}}
\newcommand\NN{\mathfrak{N}}
\begin{document}
\title{Receiver operation characteristics of quantum state discrimination}
\author{A. Bodor}
\address{Department of Physics of Complex Systems, Faculty of Science, Lor\'and E\"otv\"os University, Budapest,H- 1117 Budapest, Pázmány Péter sétány 1/A, Hungary}
\author{M. Koniorczyk}
\address{Institute of Mathematics and Informatics, Faculty of Sciences, University of P\'ecs, H-7624 P\'ecs, Ifj\'us\'ag \'utja 6., Hungary} 
\ead{kmatyas@gamma.ttk.pte.hu}
\vspace{10pt}
\begin{indented}
\item[]March 2016
\end{indented}
\begin{abstract}
  We provide a description of the problem of the discrimination of two
  quantum states in terms of receiver operation characteristics
  analysis, a prevalent approach in classical statistics. Receiver
  operation characteristics diagrams provide an expressive
  representation of the problem, in which quantities such as the
  fidelity and the trace distance also appear explicitly. In addition
  we introduce an alternative quantum generalization of the classical
  Bhattacharyya coefficient. We evaluate our quantum Bhattacharyya
  coefficient for certain situations and describe some of its
  properties. These properties make it applicable as another possible
  quantifier of the similarity of quantum states.
\end{abstract}

\pacs{03.67.-a, 03.65.Aa}
%
%
\submitto{\JPA}
%
%
%

\section{Introduction}

The problem of discriminating two quantum states is a central,
well-understood problem of quantum information processing: given a
quantum system along with the prior information that it is in either
of two possible given states with some probability, we have to tell,
as accurately as possible, in which of its possible states the system
is. There has been a tremendous and successful effort, resulting in a
detailed understanding of many facets of the problem. Instead
of recapitulating these results here, we refer to the excellent
reviews in Refs.~\cite{QSD, 1751-8121-48-8-083001}.

Note that the discrimination problem has its antecedents in classical
statistics. Given a sample drawn from either of two possible random
variables with the same possible set of values but different
distribution, a typical task of statistics is to decide which of the
variables were actually used to draw the sample.  This relates to the
idea of receiver operation characteristics (ROC) analysis, which can
be simply understood from one of its typical applications. The task is
to decide whether a patient suffers from a given illness. In order to do
so, a test is carried out, from which a conclusion is drawn. Of
course, there may be four cases. If we conclude that the patient is
ill, this maybe a true or a false positive depending on whether we
succeed or fail. True and false negatives can be understood similarly
for the negative conclusion. The rate of these events depend of course
on the test we carry out and the rules to draw the conclusion. A given
definition of the test and the processing rules specify a
\emph{discriminator}. The idea of ROC analysis is to draw the true
positive rate as a function of the false positive rate for a given
discriminator. Visualizing all the possible discriminators in such a
way leads to a couple very intuitive and useful techniques termed as
ROC analysis (see Ref.~\cite{Fawcett:2006:IRA:1159473.1159475} for a
review).

The analogy with quantum state discrimination is clear: if we consider
one of the possible quantum states to be the ``positive'', the other
one to be the ``negative'' state of the system, we have to decide if
in reality it is in the ``positive'' state. Though in our case there
is no natural assymmetry in the choice of the ``positive'' or
``negative'' state, the technique is, as we shall see, applicable in
this case, too. The classical limit of this quantum scenario is the
case of the discrimination of two probability distributions, as
described above.

Throughout this paper we elaborate on the application of ROC analysis
to the quantum scenario. In order to do so, in
Section~\ref{sec:twobernoullis} we describe the case of the
discrimination of classical random variables, in terms of ROC
curves. Here we derive those facts of ROC analysis which we shall use
in the rest of the paper. It is of special interest here to understand
the classical Bhattacharyya-coefficient~\cite{MR0010358}, a quantity
characterizing the similarity of two probability distribution, as a
certain integral of the ROC curve. Though most of these are available
in the literature of ROC analysis, it is useful to describe them in
detail for sake of self-consistency. In Section~\ref{sec:ROCquantum}
we describe the case of ambiguous discrimination of two quantum
states. We provide a full analysis of the case of two pure states and
that of two mixed states with common support, and describe the most
general case of two arbitrary states, too. We find the representation
of trace distance and fidelity in the ROC curve. In addition, we
introduce and analyze a version of quantum Bhattacharyya coefficent as
an integral of a ROC curve, just like in the classical case. This approach is
different from the quantum Bhattacharyya coefficient that was introduced in
the literature~(see e.g. Ref.~\cite{9780511535048}). We explore
several properties of this quantity which make it a useful similarity
measure~\cite{1407.3739}. In Section~\ref{sec:unambiguous} we outline
the case of unambiguous state discrimination in the ROC picture. In
Section~\ref{sec:conclusions} our results are summarized and
conclusions are drawn.

\section{Classical ROC curves}
\label{sec:twobernoullis}

Let us consider the problem of discriminating two classical
random variables with the same finite range (that is, the
discrimination of two probability distributions). Though in ROC
analysis it is common to consider empirical rates, the technique works
also for probability distributions and conditional probabilities, too,
and this is the way we follow. First we consdider the case of two
binary variables.

\subsection{Two binary variables}

Assume that we have a binary random variable (i.e. a classical bit)
with 0 and 1 as possible values. The variable can be distributed
according to two possible distributions
\begin{equation}
 (p, 1-p)\quad \text{or}\quad (q, 1-q),  
\label{eq:distros}
\end{equation}
where $p,q \in[0,1]$. We know these distributions in advance. We want
to decide whether the variable is distributed according to the first
of these. We denote this event by $\PP $ (``positive''), while the
complementary event with $\NN $ (``negative''). Let $\Pr(\PP )=\lambda\in
[0,1]$ be also given in advance. We measure the variable, the
measurement yields the actual value. We denote the two possible events
with the respective values $0$ and $1$. The task is to guess which one
is the true distribution. We denote the events corresponding to our
two possible guesses by $\pp $ and $\nn $.

By chosing a particular classifier we fix the way how we deduce the
guess $\pp/\nn$ from the result of the observation $0/1$. The
properties of the possible classifiers will be studied in the ROC
space: a classifier is represented here by a point, on the vertical
axis we have the conditional probability $\Pr(\pp|\PP)$, while on the
horizontal axis we measure $\Pr(\pp|\NN)$. If we were to repeat the
procedure on a large number samples all prepared in the same way, the
first quantity describes the \emph{true positive rate}, while the
second one the \emph{false positive rate}.

For a classifier we consider a general random mapping characterized
by the following conditional probabilities:
\begin{eqnarray}
  \label{eq:classifier}
  \Pr(\pp|0) = p_{\text{a},\pp}, \nonumber \\
  \Pr(\nn|1) = p_{\text{a,}\nn}, \nonumber \\
  \Pr(\nn|0) = 1-p_{\text{a,}\pp}, \nonumber \\
  \Pr(\pp|1) = 1-p_{\text{a,}\nn},
\label{indicator}
\end{eqnarray}
with the \emph{acceptance probabilities} $p_{\text{a,}\pp},
p_{\text{a,}\nn}\in [0,1]$. Setting
$p_{\text{a,}\pp}=p_{\text{a,}\nn}=1$ corresponds to the deterministic
case when we consider $0$ as the indicator of $\PP$ and $1$ that of
$\NN$. Alternatively, we may allow for some randomness in our
choice. Note that the choice of the distribution ($\PP/\NN$), the
measurement result ($0/1$) and the conclusion ($\pp/\nn$), as random
variables, form a Markov-chain.

In the given situation we have
\begin{equation}
  \label{eq:tpderive}
  \Pr(\pp|\PP) = \Pr(\pp |0,\PP) \Pr(0|\PP) + \Pr(\pp |1,\PP) \Pr(1|\PP).
\end{equation}
We have $\Pr(\pp|0,\PP) = \Pr(\pp |0)$ and $\Pr(\pp|1,\PP) =
\Pr(\pp|1)$ from the Markov chain, and these probabilities are known
from Eq.~\eqref{indicator}, while the rest of the probabilities is
known from \eqref{eq:distros}, so we have
\begin{equation}
  \label{eq:tp}
  \Pr(\pp|\PP) = p_{\text{a,}\pp} p + (1-p_{\text{a,}\nn}) (1-p).
\end{equation}
For the false positive rate we have
\begin{equation}
  \label{eq:fpderive}
  \Pr(\pp|\NN) = \Pr(\pp |0,\NN) \Pr(0|\NN) + \Pr(\pp |1,\NN) \Pr(1|\NN),
\end{equation}
which evaluates along the same lines of thought as
\begin{equation}
  \label{eq:fp}
  \Pr(\pp|\NN) = p_{\text{a,}\pp} q + (1-p_{\text{a,}\nn}) (1-q).
\end{equation}
Thus for a given $p$ and $q$ we have to plot $p_{\text{a,}\pp} p + (1-p_{\text{a,}\nn}) (1-p)$
against $p_{\text{a,}\pp} q + (1-p_{\text{a,}\nn}) (1-q)$ for all possible $(p_{\text{a,}\pp}, p_{\text{a,}\nn})$ pairs
as parameters to obtain the ROC diagram for all the possible
classifiers. Note that it does not depend on $\lambda$: it
characterizes the classifier based on the two probability
distributions to be distinguished, and not the prior probability of
having one of them at hand.

To visualize the behavior of the point set in the ROC space defined by
Eqs.~\eqref{eq:fp} and~\eqref{eq:tp}, we express $(1-p_{\text{a,}\nn})$
from Eq.~\ref{eq:fp}:
\begin{equation}
  \label{eq:1minusrho}
  0\leq (1-p_{\text{a,}\nn}) = \frac{x-p_{\text{a,}\pp} q}{1-q}\leq 1,
\end{equation}
where we have indicated the bounds on
$(1-p_{\text{a,}\nn})$. Substituting this to~\eqref{eq:tp}, after some
algebra we have
\begin{equation}
  \label{eq:curve1}
  y=\frac{1-p}{1-q}x+\left( p-\frac{1-p}{1-q}q\right)p_{\text{a,}\pp} = Ax+Bp_{\text{a,}\pp}.
\end{equation}
(We exclude the trivial cases when $p,q= 0\ \mbox{or}\ 1$.)
For $p_{\text{a,}\pp}=0$ this defines a line through the origin of the ROC
space. For $p_{\text{a,}\pp}=1$ we get a parallel line, above or below the other
one, depending on the sign of $B$. The available region is within
these parallel lines. However, $p_{\text{a,}\pp}$ may have stronger bounds due to
the bounds in Eq.~\eqref{eq:1minusrho}. These are described by the two
parallel lines
\begin{equation}
  \label{eq:additionalboundaries1}
  y=\frac{p}{q}x,
\end{equation}
and
\begin{equation}
  \label{eq:additionalboundaries2}
 y= \frac{p}{q}x+(1-p)-\frac{p}{q}(1-q).
\end{equation}
The available region of the ROC space is thus a parallelogram, the
convex hull of the points $(0,0)$, $(1,1)$, $(q,p)$, and
$(1-q),(1-p)$. An example is depicted in
Fig.~\ref{fig:twobernoulliROC}.
\begin{figure}
  \centering
  \includegraphics{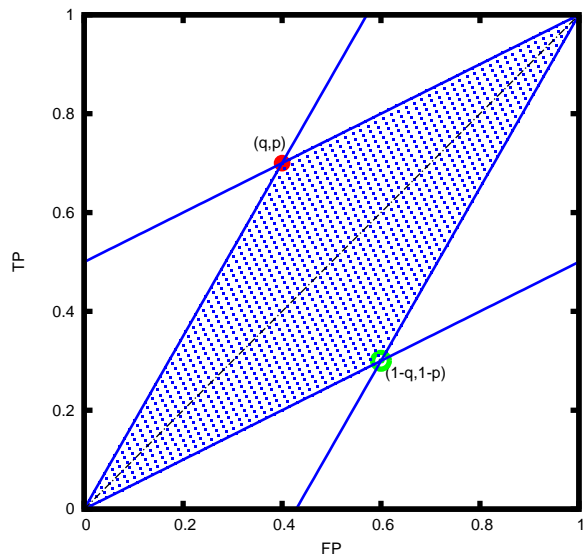}
  \caption{(color online) The available ROC region for the binary variables with
    $p=0.7$, $q=0.4$.}
  \label{fig:twobernoulliROC}
\end{figure}

In the ROC space, the better a discriminator performs, the closer its
representing point is to the point $(0,1)$ (representing a true
positive rate of 1, and a false positive rate of 0). Thus the
interesting part of the ROC diagram is the upper piecewise linear
curve $(0,0)\to \mathbf{P} \to (1,1)$, where the point $\mathbf{P}$ is
either $(q,p)$, or $(1-q, 1-p)$. In the latter case we can make the
substitutions $p\to 1-p$ and $q\to 1-q$, resulting in the 
diagram of the same shape, but the two possible $\mathbf{P}$ points
interchanged. We may thus always consider $\mathbf{P}=(q,p)$ to be on
the upper broken line.

Note that the line representing the best possible discriminators,
termed as the ``optimal ROC curve'' in what follows, connects the
points specified by the two possible cumulative distributions
$(0,q,1)$ and $(0,p,q)$ plotted against each other.

It is important to point out that the ROC domain is convex: when
considering two points representing two feasible discriminators, all
the discriminators whose points are on the line interconnecting the
two chosen discriminators on the ROC space are feasible, too. The
discriminators on interconnecting line represent discriminators which
are defined so that we randomly apply either of the two considered
discriminators, according to the evaluation of a random binary
variable~\cite{Fawcett:2006:IRA:1159473.1159475}.  Note that the
optimal choice of the parameters $p_{\text{a,}\pp}$ and
$p_{\text{a,}\nn}$ is not completely trivial. However, at the point
$(q,p)$ we have $p_{\text{a,}\pp}=p_{\text{a,}\nn}=0$, corresponding
to the case when we choose to conclude $\pp$ if $1$ is measured, and $\nn$
if $0$ is measured, deterministically. Of course, these leads to a TP
rate of $p$ and an FP rate of $q$. The points $(0,0)$ and $(1,1)$
correspond to a (though less useful) choice of $p_{\text{a,}\pp}=0,
p_{\text{a,}\nn}=1$ and $p_{\text{a,}\pp}=1, p_{\text{a,}\nn}=0$
respectively. As the available domain is convex, it is indeed
plausible that we have the broken line connecting these segments as
the ROC curve.

Let us describe the sets of points in the ROC space which correspond
to a constant overall probability of failure of the identification,
given a prior probability $\lambda$. This overall probability reads
\begin{eqnarray}
  \label{eq:failure}
  \Pr(\text{fail}) = \Pr(\pp,\NN) + \Pr(\nn,\PP) = \Pr(\NN) \Pr(\pp|\NN) + \Pr(\PP) \Pr(\nn|\PP) \nonumber \\
  = (1-\lambda) \Pr(FP) + \lambda (1-\Pr(TP)).
\end{eqnarray}
Thus for a given $\Pr(\text{fail})$, we have a line in the ROC space:
\begin{equation}
  \label{eq:constfail}
  \Pr(TP) = \frac{\lambda - \Pr(\text{fail})}{\lambda} + \frac{1-\lambda}{\lambda} \Pr(FP),
\end{equation}
with the slope of $(1-\lambda)/\lambda$.

Finally let us now consider the case of a random variable with $N$
possible values. It can be shown that with the appropriate choice of
the mapping of the measurement results (having $N$ possible values in
this case) to the conclusion ($\pp$ or $\nn$), we get the optimal ROC
curve by plotting the two cumulative distributions against each other,
and interconnecting the so obtained points with lines, in the
descending order of $p_i/q_i$.


\subsection{Bhattacharyya coefficient from the ROC curve}

Observe that the shape of the ROC curve is characteristic for the
distributions in question: if there are no points with the same
$p_i/q_i$ ratio, the curve just determines the two distributions. Only
points with the same $p_i/q_i$ introduce a possible ambiguity: they may
be interchanged, subdivided into additional points by introducing new
possible values of the variables with the same probability ratio, or
merged to correspond to the same value.

A prevalently used measure of the similarity of two distributions
can be geometrically deduced from the ROC curve.  Consider the
coordinate transformation
\begin{equation}
  \label{eq:transf1}
  \begin{pmatrix}
    q\cr p 
  \end{pmatrix}
\to 
\begin{pmatrix}
  t \cr s
\end{pmatrix}
=
\begin{pmatrix}
  \frac{p-q}{2} \\ \frac{p+q}{2}
\end{pmatrix},
\end{equation}
that is, we consider axes rotated by $\pi/4$ counterclockwise to the
original ROC axes, and the interchange of their ordering, and
shrinking with a factor of $\sqrt{2}$. Further, consider the following
Minkowski-metrics defined in terms of the new coordinates:
\begin{equation}
  \label{eq:minkowskimetrics}
  d{\ell}^2=ds^2-dt^2.
\end{equation}
In the terminology of special relativity the ``space'' axis will be
the original $(0,0)\to (1,1)$ diagonal of the ROC space; while the
``time'' axis is at right angles to this, pointing upwards.  In this
way, the part of the domain of the ROC space above the diagonal, where
the optimal ROC curve resides, has positive coordinates. Thus the
original TP axis is one of the edges of the ``future light cone'',
while the other axis is that of the ``past light cone''. Note that the
lines parallel to these axis have zero length in this metric.

Assume that the cumulative distributions
\begin{equation}
  \label{eq:cumulativeP}
  P_k=\sum_{l=1}^k p_k
\end{equation}
and
\begin{equation}
  \label{eq:cumulativeQ}
  Q_k=\sum_{l=1}^k q_k
\end{equation}
are plotted against each other, resulting in the ROC curve described
in the previous Section. The square of the length $\ell_k$ of the line
segment of the broken line (in Minkowski metric) between the points
$(Q_k,P_k)$ and $(Q_{k+1}, P_{k+1})$ reads
\begin{eqnarray}
  \label{eq:segmentlength}
  \ell^2_k = -\left( 
\frac{P_{k+1}-Q_{k+1}}2 - \frac{P_{k}-Q_{k}}{2} 
\right)^2
+\left(\frac{P_{k+1}+Q_{k+1}}2 - \frac{P_{k}+Q_{k}}{2} 
\right)^2 \\
=  -\left(\frac{p_k-q_k}2\right)^2 + \left(\frac{p_k+q_k}{2}\right)^2
= p_kq_k.
\end{eqnarray}
Hence, for the whole length of the broken line in Minkowski metric we
have
\begin{equation}
  \label{eq:bhattaclass}
  \ell =  \sum_k \ell_k = \sum_k \sqrt{p_kq_k} = B(p,q)
\end{equation}
is the Bhattacharyya-coefficient~\cite{MR0010358} prevalently used in
the literature of statistics. It measures in a way the similarity
between two probability distributions.

\section{The ROC curves for quantum states}
\label{sec:ROCquantum}

In case of ambiguous quantum state discrimination, we are
given a system in an unknown quantum state, but we know in advance
that the state is $\varrho_{\PP}$ with probability $\lambda$ and
$\varrho_{\NN}$ with probability $1-\lambda$. As we discriminate two
states, we can assume that we work in a two-dimensional subspace of
the Hilbert-space of the studied system, hence, $\varrho_{\PP}$ and
$\varrho_{\NN}$ are $2\times 2$ density matrices. They are known a priori,
along with $\lambda \in[0,1]$.

In case of ambiguous state discrimination we have one sample, we are
allowed to perform any generalized (POVM) measurement, and we have to
decide which one of the two states was given, with the lowest
probability of error. The minimum probability of error is given by the
well-known Helstrom-formula~\cite{Helstrom}:
\begin{equation}
  \label{eq:Helstrom}
  P_{\text{E, min}}=\frac12\left(1-|| \lambda \varrho_{\PP} -(1-\lambda)\varrho_{\NN}||_1\right).
\end{equation}
In case of ROC analysis, we consider the two possible conclusions $\pp$
(The state was $\varrho_{\PP}$ and $\nn$ (the state was $\varrho_{\NN}$). As for
the measurement, we consider the POVM leading to the optimum in the
Helstrom formula. This is formed by the spectral projectors associated
to the positive and negative eigenvalues of the Hermitian matrix
\begin{equation}
  \label{eq:LambdaMatrix}
  \Lambda= \lambda \varrho_{\PP} -(1-\lambda)\varrho_{\NN}.
\end{equation}

\subsection{Two pure states}
\label{sec:twopure}

In this Section we consider the discrimination of two pure
states. This problem is two-dimensional by nature, so the
discrimination of qubit states is the most general as long as mixed
states are not considered.

We are given a quantum bit in either of the states
\begin{equation}
  \label{eq:twopure}
  \ket{\Psi_{\PP}} =
  \begin{pmatrix}
    \cos{\frac{\theta_p}{2}}\cr
    \sin{\frac{\theta_p}{2}}
  \end{pmatrix}
\quad \text{or}\quad
  \ket{\Psi_{\NN}} =
  \begin{pmatrix}
    \cos{\frac{\theta_q}{2}}\cr
    \sin{\frac{\theta_q}{2}}
  \end{pmatrix}
\end{equation}
as the ``positive'' and ``negative'' case.
(To compare with the classical case, set $p=\cos^2{(\theta_p/2)}$ and
$q=\cos^2{(\theta_q/2)}$ in Eq.~\eqref{eq:distros}). For symmetry reasons we
may consider real vectors, but we allow the azimuthal angles
$\theta_p$ and $\theta_q$ to be in the interval $[0,2\pi[$, thereby
obtaining two arbitrary pure states in the $x-z$ plane of the
Bloch-sphere.

As the measurement we consider a projective measurement in the basis
  \begin{equation}
  \label{eq:measurementbasis_pure}
  \ket{\Phi_{\pp}} =
  \begin{pmatrix}
    \cos{\frac{\alpha}{2}}\cr
    \sin{\frac{\alpha}{2}}
  \end{pmatrix},\quad
  \ket{\Phi_{\nn}} =
  \begin{pmatrix}
    -\sin{\frac{\alpha}{2}}\cr
    \cos{\frac{\alpha}{2}}
  \end{pmatrix}, \quad \alpha\in[0,2\pi[.
\end{equation}
We assume now that the measurement result corresponding to
$\ket{\Phi_{\pp}}$ leads to the conclusion that we were given
$\ket{\Psi_{\PP}}$, and the same holds for the case of
``false''. We shall discuss the effect of the classical postprocessing
as described in the previous section later.

To obtain the ROC representation we need the false positive and the
true positive probabilities, which read, using the same reasoning as in
the classical case, after a straightforward calculation,
 \begin{eqnarray}
   \label{eq:tpfpquantum}
   \Pr(TP)= |\braket{\Phi_{\pp}|\Psi_{\PP}}|^2=
   \frac{1+\cos(\alpha -  \theta_p)}{2} 
\nonumber \\
   \Pr(FP)=|\braket{\Phi_{\nn}|\Psi_N}|^2=\frac{1+\cos(\alpha -
    \theta_q)}{2},
 \end{eqnarray}
 again irrespectively of the prior probability of being given either
 of the possible states. This is apparently the parametric equation of
 an ellipse. This is visualized in Fig.~\ref{fig:roctwopure}.
\begin{figure}
  \centering
  \includegraphics{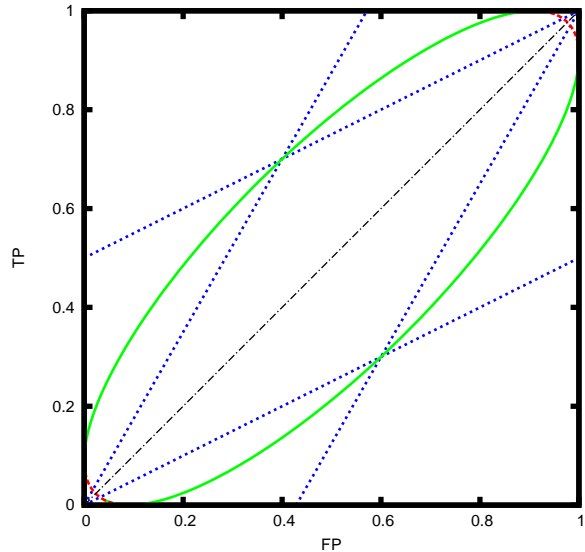}
  \caption{(color online) The ROC curve for two pure quantum states corresponding to
    the classical binary variables $p=0.7$, $q=0.4$ is the ellipse
    in the figure. For convenience we have also plotted the lines
    relevant to the classical case.  The parts in dashed red are not
    accessible by using the measurement corresponding to optimal
    ambiguous discrimination of the states (from the Helstrom
    formula).}
  \label{fig:roctwopure}
\end{figure}
Our ellipse's main axis lies on the line $(0,0)\to (1,1)$.  The edge
points $(q,p)$ and $(1-q, 1-p)$ are located on the ellipse. This just means that each of the equations
\begin{eqnarray}
   \label{eq:tpfpquantumclass}
   \cos\left( \frac{\theta_p}{2}\right)^2 =
   \frac{1+\cos(\alpha -  \theta_p)}{2},\ \text{and}
\nonumber \\
   \cos\left( \frac{\theta_q}{2} \right)^2
 =\frac{1+\cos(\alpha - 
    \theta_q)}{2}
 \end{eqnarray}
 indeed do have a solution for $\alpha$.

Let us now describe the points at which the ellipse touches the edges
of the ROC-square. The true positive rate equals to one when $\alpha
 - \theta_p = 0$, that is, $\ket{\Phi_p}=\ket{\Psi_{\PP}}$. The false
 positive rate at this point is thus just the fidelity of the two
 states to be distinguished:
 \begin{equation}
   \label{eq:purefidel}
   \mathcal{F}(\Psi_{\NN}, \Psi_{\PP})=|\braket{\Psi_{\PP}|\Psi_{\NN}}|^2=
|\braket{\Phi_{\pp}|\Psi_{\NN}}|^2.
 \end{equation}
 The false positive rate is one at $\alpha - \theta_q=0$, thus
 $\ket{\Phi_p}=\ket{\Psi_N}$, yielding the fidelity again ,as
 \begin{equation}
   \label{eq:purefidel2}
   \mathcal{F}(\Psi_{\NN}, \Psi_{\PP})=|\braket{\Psi_{\PP}|\Psi_{\NN}}|^2=
|\braket{\Psi_{\PP}|\Phi_{\NN}}|^2.
 \end{equation}
 The true positive rate is zero when $\alpha - \theta_p = \pi$, that
 is, $\alpha=\pi+\theta_p$ hence $\ket{\Phi_p}=\ket{\Psi_{\PP}^\bot}$,
 where $\ket{\Psi_{\PP}^\bot}$ is a state orthogonal to $\ket{\Psi_{\PP}}$ in
 the Hilbert-space. therefore we have for the false positive rate
 \begin{equation}
   \label{eq:purefidel3}
   |\braket{\Phi_{\pp}|\Psi_{\NN}}|^2=|\braket{\Psi_{\PP}^\bot|\Psi_{\NN}}|^2=1-\mathcal{F}(\Psi_{\NN}, \Psi_{\PP}).
 \end{equation}
 The point for which the false positive rate is zero is at
 $\alpha=\theta_n+\pi$, with the true positive rate of
 $1-\mathcal{F}(\Psi_{\NN}, \Psi_{\PP})$, which can be justified along the
 same lines.  Hence, we can determine the fidelity of the two
 states immediately according to the ROC curve.

The classical postprocessing by defining probabilities to accept or
reject true or false measurement results in points within the
ellipse, making the full area of the ellipse accessible by some
setup. Note, that this is ambiguous in the sense that different
choices of measurement and postprocessing may result in the same point
in the ROC plane.

One may follow a different approach: given the
states to be discriminated, and the prior probability $\lambda$, we
may determine the spectral projectors from
Eq.~\eqref{eq:LambdaMatrix}, and consider the resulting projective
measurements. In this case instead of the direct parameter $\alpha$ of
the measurement, we have $\lambda$ as a parameter. Plotting the
resulting points in the ROC curve, we obtain a part of the upper and
lower parts of the ellipse, as depicted in
Fig.~\ref{fig:roctwopure}. The missing parts of the ellipse (in cyan
in the figure) are accessible by the spectral projectors of the
overall prior density matrix
\begin{equation}
  \label{eq:RhoMatrix}
  \rho= \lambda \varrho_{\PP}  + (1-\lambda)\varrho_{\NN}.
\end{equation}
Thus we have two different interpretations: either given the two
states, we may either consider all the possible measurements
parametrized by $\alpha$ to obtain the ellipse, or we may consider all
the possible prior probabilities $\lambda$, and calculate the optimal
measurement, whose corresponding point shall give the upper and lower
part of the measurement.

To find the prior probability $\lambda$ for which a point on the curve
in the ROC space is optimal, recall that the curves with a constant
overall failure probability are the lines in
Eq.~\eqref{eq:constfail}. Assume that we are in a point at the upper
part of the ellipse (i.e. we fix a measurement direction
$\alpha$). Recall also that a point is optimal if it is as close to
the point (1,0) as possible. In order to exclude the possibility of
decreasing this distance, provided that we fix a prior probability
(and thus the slope of the line), the line must be a tangent. 
The slope of the tangent of the ellipse in the given point is, on the other hand,
\begin{equation}
  \label{eq:tangent}
  \frac{\lambda}{1-\lambda} = \frac{d \Pr(TP)}{d\alpha} \left(\frac{d \Pr(FP)}{d\alpha} \right)^{-1} =
  \frac{\sin(\theta_p-\alpha)}{\sin(\theta_q-\alpha)}.
\end{equation}
The relation linking the parameter $\alpha$ in our first approach, to
the corresponding parameter $\lambda$ in the second can be found from
this equation. In addition, we have an operative notion for the
tangents of the ROC curve, which is in fact derives from completely
general features of the ROC curve.

It is of some interest to evaluate the classical Bhattacharyya
coefficient for the distributions $\Pr(TP), 1-\Pr(TP)$ and $\Pr(FP),
\Pr(1-FP)$, by substituting these probabilities from
Eq.~\eqref{eq:tpfpquantum} into~\eqref{eq:bhattaclass}. For the point
corresponding to the classical case, it is just the classical
coefficient found in Eq.~\eqref{eq:bhattaclass}. For the upper part of
the ellipse, we find that the classical Bhattacharyya coefficient is
constant for all the points. Thus the upper ellipse segment
corresponding to the different discriminators for two pure qubits (or,
otherwise speaking, the optimal discriminators for different prior
probabilities $\lambda$) are on the isolines of the classical
Bhattacharyya coefficient.

\subsection{A quantum Bhattacharyya coefficient} 
\label{sec:qbatta_pure}

According to our experience with the length in Minkowski metric of the
upper part of the ROC curve, the question naturally arises whether the
same integral for the upper part of the ellipse, from the points where
it touches the $0-1$ and $1-1$ axes is informative for the quantum
case. (Note that the optimal ROC curve contains the vertical line
connecting the $(FP=0,TP=0)$ point to the point of tangency of the
ellipse, as well as the horizontal segment from the other point of
tangency to the $(FP=1,TP=1)$ point, but these are of zero
Minkowski-lenth.

To calculate this we carry out the same transformation as in
Eq.~\eqref{eq:transf2}.
\begin{equation}
  \label{eq:transf2}
  \begin{pmatrix}
    \Pr(FP)\cr \Pr(TP) 
  \end{pmatrix}
\to 
\begin{pmatrix}
  t \cr s
\end{pmatrix}
=
\begin{pmatrix}
  \frac{\Pr(TP) -\Pr(FP)}{2} \\ \frac{\Pr(TP) +\Pr(FP)}{2}
\end{pmatrix},
\end{equation}
and substitute the results from Eq.~\eqref{eq:tpfpquantum} into
this. We obtain
\begin{eqnarray}
  \label{eq:xymink}
  t= \frac14\left( 
    \left( \cos(\theta_p) - \cos (\theta_q)  \right) \cos( \alpha) +
    \left( \sin \theta_p) - \sin (\theta_q)  \right) \sin( \alpha) 
\right)\nonumber \\
  s= \frac14\left( 
    \left( \cos(\theta_p) + \cos (\theta_q)  \right) \cos( \alpha) +
    \left( \sin \theta_p) + \sin (\theta_q)  \right) \sin( \alpha) 
\right).
\end{eqnarray}
 The length of the curve in Minkowski metric is obtained then as
\begin{equation}
  \label{eq:Bquantum}
  B(\ket{\Psi_{\PP}}, \ket{\Psi_{\NN}}) = \int\limits_{\text{upper curve}} \sqrt{ \left(\frac{ds}{d\alpha} \right)^2 -
\left(\frac{dt}{d\alpha} \right)^2}  d\alpha.
\end{equation}
As we integrate along the upper part of the curve, we go through
alphas from $\Pr(FP)=0$ to $\Pr(TP)=1$. In the particular case,
however, we have to choose the integration domain to go through the
upper curve in the proper direction.

After some algebra we get
\begin{equation}
  \label{eq:integral}
B(\ket{\Psi_{\PP}}, \ket{\Psi_{\NN}}) =
\frac{\sqrt{2}}{4}
\int\limits_{\text{upper curve}} \sqrt{
\cos(\theta_q+\theta_p)-
\cos(\theta_q+\theta_p-2\alpha)
}
d\alpha.  
\end{equation}
This leads to elliptic integrals, and can be evaluated numerically.
Let us now consider the particular choice of $\theta_p=0$, and
$\theta_q\in [0,\pi]$, thus the positive state points upwards on the
Bloch-sphere, while the other one ranging from this state to the one
pointing downwards, being orthogonal to the other in the
Hilbert-space. This choice covers all the relevant cases.  
In this case the integral in Eq.~\eqref{eq:integral} reads
\begin{equation}
  \label{eq:integralp0}
B(\ket{\Psi_{\PP}}, \ket{\Psi_{\NN}}) = 
\frac{\sqrt{2}}{4}
  \int\limits_{\theta_q+\pi}^{2\pi} \sqrt{\cos(\theta_q)-\cos(2\alpha-\theta_q)}
d\alpha.
\end{equation}
This in fact can be expressed in a closed form as 
\begin{eqnarray}
  \label{eq:integralp1}
B(\ket{\Psi_{\PP}}, \ket{\Psi_{\NN}}) =
\frac{1}{4}
\left[
\frac{2 E(x|k)-(1-\cos\theta) F(x|k)}
{\sin\left(\frac\theta2-\alpha\right)}
\right]_{\theta_q+\pi}^{2\pi},
\end{eqnarray}
where 
\begin{equation}
  \label{eq:Ellx}
  x = \cos\left( \frac\theta2-\alpha\right) \sqrt{1+\cos\theta}, 
\end{equation}
and
\begin{equation}
  \label{eq:ellk}
  k = \frac{\sqrt{2}}{2} \sqrt{1+\cos\theta},
\end{equation}
and $F(x|k)$ and $E(x|k)$ are the incomplite elliptic integrals of the
first and second kind, respectively.

The dependence of $B$ on $\theta_q$ is plotted in
Fig.~\ref{fig:BFtwopure}. 
\begin{figure}
  \centering
  \includegraphics{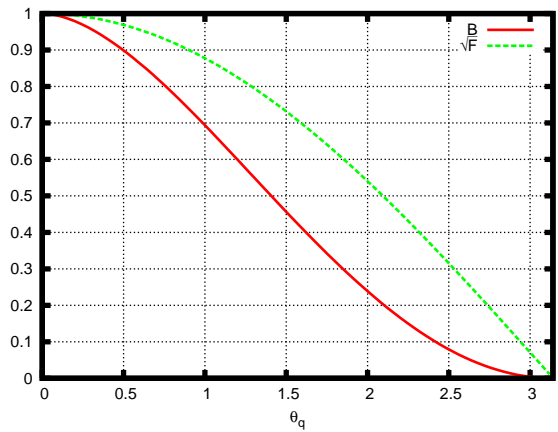}
  \caption{(color online) The quantum Bhattacharyya coefficient in
    Eq.~\eqref{eq:Bquantum}, and the square root fidelity of the two states for
    $\theta_p=0$, as a function of $\theta_q$.}
  \label{fig:BFtwopure}
\end{figure}
We have plotted the fidelity of the two states, too.  We can observe
that the fidelity is greater or equal than the Bhattacharyya
coefficient (equal for the values 0 and 1). Both are monotonously
decreasing.

\subsection{Two qubits}

Now we generalize our ideas to the general two-dimensional case, that
is, we allow the two states to be mixed, described by the
two-dimensional density operators $\varrho_{\PP}$ and $\varrho_n$. On the
Bloch-sphere, each of them can be written as $\varrho = 1/2 (\hat 1 +
{\mathbf{r}} {\mathbf{\sigma}})$, where ${\mathbf{r}}$ is the real
3-vector representing the state, while ${\mathbf{\sigma}}$ contains
the three Pauli matrices. For pure states, the $\mathbf{r}$-s are unit
vectors. The coordinates of the ROC curve in general read
 \begin{eqnarray}
   \label{eq:tpfpquantum_rho}
   \Pr(TP)= \tr(M_{\pp} \varrho_{\PP})
\nonumber \\
   \Pr(FP)= \tr(M_{\pp} \varrho_{\NN}),
 \end{eqnarray}
 where $M_{\pp}$ is the measurement operator for the positive
 conclusion. As trace is linear, and thus these expressions are
 inhomogeneous linear functions of the Bloch-vectors ${\mathbf{r}}_P$
 and ${\mathbf{r}}_N$, the ellipse we found in the case of two pure
 states is just shrinked anisotropically to the point
 $(1/2,1/2)$. (The TP and FP directions shrink proportionally with the
 length of the respective Bloch-vectors.) For mixed states, this does
 not touch the edges of the ROC space anymore. However, as the $(0,0)\to
 (1,1)$ line is necessarily part of the feasible ROC domain , the
 upper part of the ROC curve we are looking for is the convex hull of
 the (shrinked) ellipse and the points $(0,0)$, $(1-1)$.

 Recall that the ellipses found in \ref{sec:twopure} for two pure
 states are the iso-Bhattacharyya lines, and they touch the edges of
 the ROC space at points described by the fidelity of the two
 states. In case of mixed states, the square root fidelity is the
 minimum of the classical square root fidelities, that is, the
 Bhattacharyya coefficients of the classical probability distributions
 the two quantum states can yield when
 measured~\cite{9780511535048,PhysRevLett.76.2818}. Hence, to find the
 square root quantum fidelity we need to find the iso-Bhattacharyya
 ellipse which is tangent to the shrinked ellipse. The tangent point
 will give the optimal discriminator. The square root fidelity of the
 two mixed states is determined by the points where the tangent
 iso-Bhattacharyya ellipse touches the edges of the ROC space. Note
 that this consideration is valid for two dimensional states (that is,
 the two density operators' common support is two dimensional). For
 higher dimensional classical distributions there are no
 iso-Bhattacharyya curves, the existence of which is due to the fact
 that a pair of distributions is characterized by a single point in
 the ROC space.

\subsection{Two arbitrary quantum states}

Let us now omit any restriction, and consider $\varrho_{\PP}$ and
$\varrho_{\NN}$ in an arbitrary Hilbert-space. In this case we also need
to obtain a convex domain in the ROC space, whose upper edge is a ROC
curve corresponding to potentially optimal discrimination.  

This upper edge can be obtained by calculating the convex hull of the
points corresponding to different projective measurements resulting
from the Helstrom formula, for all possible prior probabilities
$\lambda$. The projective measurements $M_{\pp}, M_{\nn}$ resulting from the
Helstrom formula are projectors onto the positive vs. negative
spectral subspace of the matrix $\Lambda$ in
Eq.~\eqref{eq:LambdaMatrix}. Hence, the parts of the curve resulting
from the Helstrom formula directly have discontinuities: when the rank
of $M_{\pp}$ changes (due to the change of sign of an eigenvalue of
$\Lambda$), there is a split.

Alternatively we may just pick the two states and calculate all the
probabilities for all possible pairs of orthogonal projectors. If the
density operators' common support is $n$ dimensional, the examined
projectors' rank should range from $0$ to $n$. Each rank will result
in a convex spot around the line $(0,0),(1,1)$ in the ROC space (for
dimensions $0$ and $n$, we get the points $(0,0)$ and $(1,1)$
respectively. The convex hull of these will describe the accessible
domain in the ROC space, and its upper part will be our ROC
curve. This is illustrated in Fig.~\ref{fig:qroc_gen}.
\begin{figure}
  \centering
  \includegraphics{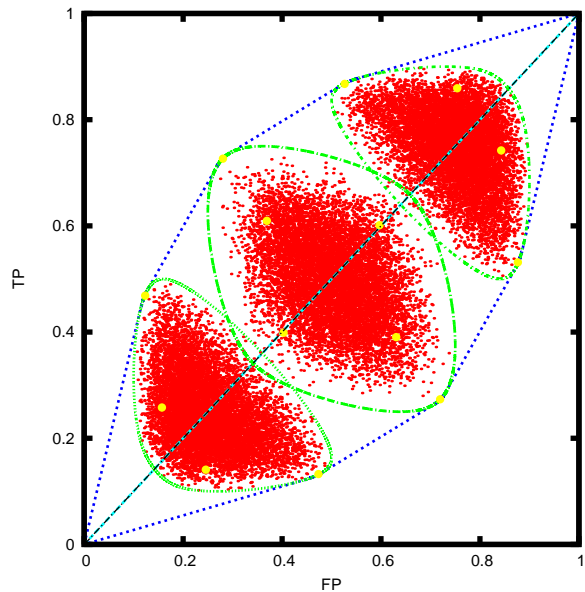}
  \caption{(color online) ROC curve/space of to randomly selected 4
    dimensional density matices. The dash-dotted (green) closed curves
    are regions accessible with rank 1, 2 and 3 projectors,
    respectively.  The dotted (blue) curve is the actual ROC curve,
    according to the Helstrom formula. The scattered small (red) dots
    are randomly (according to the Haar measure) selected
    projectors. The bigger, lighter (yellow) points are the those
    corresponding to the optimal fidelity measuring observables
    defined by Eq.~\eqref{eq:fidelitymeasurable}.}
  \label{fig:qroc_gen}
\end{figure}

Note that as the constant failure probability curves are the lines in
Eq.~\eqref{eq:constfail} (this is a general property of the ROC
space), the tangents of the ROC curve have the same operational
meaning as in the case of pure states: the tangent points give the
optimal discriminator for the given prior.

\subsubsection{Trace distance}

The ROC curve also yields a geometric notion of the trace distance.
Assume that we have equal prior probabilities, i.e.,
$\lambda=1/2$. Recall that according to Eq.~\eqref{eq:tangent}, the
optimal discriminator at the ROC curve is its point of tangency with
its $45$ degree tangent, i.e. the one parallel to the $(0,0)-(1,1)$
line. On the other hand, this discriminator is obtained from the
Helstrom formula, as the projective measurement $M_{\pp}, M_n$ onto the
positive vs. negative spectral subspace of the matrix $\Lambda$ in
Eq.~\eqref{eq:LambdaMatrix}, which in the present case reads
\begin{equation}
  \label{eq:LambdaMatrix12}
  \Lambda=\frac12 \left( \varrho_{\PP} -\varrho_{\NN}\right).
\end{equation}
The trace distance is defined as~\cite{9780511535048}
\begin{equation}
  \label{eq:tracedistdef}
  d_{\text{tr}}(\varrho_{\PP}, \varrho_{\NN}) = \frac12 \tr | \varrho_{\PP} - \varrho_{\NN} | = \tr |\Lambda|.
\end{equation}
(Note that we use the definition having the factor $1/2$ included.)
It can be written as
\begin{equation}
  \label{eq:tracedistderive}
  2d_{\text{tr}}(\varrho_{\PP}, \varrho_{\NN}) = \tr\left( M_{\pp} (\varrho_{\PP} - \varrho_{\NN}) \right) + \tr\left( M_{\nn} (\varrho_{\PP} - \varrho_{\NN}) \right).
\end{equation}
As $\Lambda$ is traceless, the two summands on the right hand side are equal, thus we have
\begin{equation}
  \label{eq:tracedistderive2}
  d_{\text{tr}}(\varrho_{\PP}, \varrho_{\NN}) = \tr\left( M_{\pp} (\varrho_{\PP} - \varrho_{\NN}) \right)=\Pr(TP)-\Pr(FP).
\end{equation}
This is just the intersection of the $45$ degree tangent with the TP
axis. The trace distance can be thus read directly from the ROC curve.

\subsection{The quantum Bhattacharyya coefficient} 

Following the line of thought of the previous sections, we finally
analyze the possibility of introducing a quantum Bhattacharyya
coefficient as we did for two pure states in
Section~\ref{sec:qbatta_pure}. Apart from the particular shape of the
ROC curve, we have not exploited any property stemming from the
quantum scenario. Hence, we may call the length of the ROC curve in
Minkowski metric a quantum Bhattacharyya coefficient. This can be
considered as a novel quantity to measure the similarity of the
states. It can be evaluated numerically, in a straightforward but
somewhat laborious way. We consider here its general properties here
instead.

\begin{proposition}
The quantum Bhattacharyya coefficient is zero if and only if the two
states can be distinguished with certainty (i.e., the two density
operators have disjoint support), and one if and only if the two
states are the same.  
\end{proposition}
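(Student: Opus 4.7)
The plan is to interpret $B$ as a Minkowski arc length along the upper ROC curve from $(0,0)$ to $(1,1)$. Under the change of variables of Eq.~\eqref{eq:transf2} the endpoints map to $(t,s)=(0,0)$ and $(0,1)$, so they are timelike-separated with Minkowski interval $1$, and along the curve one has the pointwise identity $ds^2-dt^2 = d\Pr(TP)\,d\Pr(FP)$. Because the upper boundary of the (convex) accessible region is monotone nondecreasing in both coordinates, both differentials are nonnegative and the curve is a causal, future-directed curve in the upper $(1{+}1)$-Minkowski half-plane $t\ge 0$. Parametrising by $s\in[0,1]$ and writing $t=t(s)$ yields a concave nonnegative function with $t(0)=t(1)=0$ and $|dt/ds|\le 1$ almost everywhere, and in this parametrisation $B=\int_0^1\sqrt{1-(dt/ds)^2}\,ds$.

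For the ``zero'' direction, $B=0$ forces $|dt/ds|=1$ almost everywhere, and concavity together with the boundary conditions then leaves only $t(s)=\min(s,1-s)$; its apex corresponds to $(\Pr(FP),\Pr(TP))=(0,1)$, so the accessible region contains the point $(0,1)$. The existence of an effect $M_{\pp}$ with $\tr(M_{\pp}\varrho_{\PP})=1$ and $\tr(M_{\pp}\varrho_{\NN})=0$ is, by the standard support argument, equivalent to $\varrho_{\PP}$ and $\varrho_{\NN}$ having orthogonal (disjoint) supports. Conversely, if the supports are disjoint then the projector onto $\mathrm{supp}(\varrho_{\PP})$ realises $(0,1)$; convexity together with the always-present $(0,0)$ and $(1,1)$ then fills in the L-shaped upper boundary through $(0,1)$, whose two null legs contribute zero Minkowski length, so $B=0$.

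For the ``one'' direction, the reverse triangle inequality of $(1{+}1)$-Minkowski geometry, which in the above parametrisation reduces to the pointwise estimate $\sqrt{1-(dt/ds)^2}\le 1$ integrated over $[0,1]$, yields $B\le 1$ with equality iff $t\equiv 0$, i.e.\ the ROC curve is the diagonal from $(0,0)$ to $(1,1)$. Since the accessible region is symmetric under $(\Pr(FP),\Pr(TP))\mapsto(1-\Pr(FP),1-\Pr(TP))$ (swap $M_{\pp}\leftrightarrow M_{\nn}$), its lower boundary is the reflection of its upper boundary; when both coincide with the diagonal, the whole region collapses to it. This means $\tr(M\,\varrho_{\PP})=\tr(M\,\varrho_{\NN})$ for every effect $M$, whence $\varrho_{\PP}=\varrho_{\NN}$. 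The reverse implication is immediate: identical states yield $\Pr(TP)=\Pr(FP)$ for every measurement, so the ROC curve is the diagonal, of Minkowski length $1$.

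The main obstacle I expect is not the geometric idea, which is clean, but the rigorous handling of the parametrisation when the ROC curve is only guaranteed to be concave and monotone and may exhibit corners or vertical/horizontal segments at its ends. One has to derive the concavity of $t(s)$ and the slope bound $|dt/ds|\le 1$ from the corresponding properties of the upper boundary of the accessible region in the $(\Pr(FP),\Pr(TP))$ plane, justify the almost-everywhere calculus on a general concave function, and verify that the two extremal profiles $t\equiv 0$ and $t(s)=\min(s,1-s)$ correspond exactly to the diagonal ROC curve and the L-shaped curve through $(0,1)$, respectively. Once these technical points are in place, both equivalences follow as outlined.
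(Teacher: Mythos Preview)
Your argument is correct and follows the same geometric idea as the paper's proof: $B=0$ corresponds to the L-shaped ROC curve through $(0,1)$ (hence perfect distinguishability), and $B=1$ corresponds to the diagonal (hence identical states). The paper's proof merely asserts that the L-shape is ``the only one with zero length'' and the diagonal is ``the longest line in the ROC space in this metric''; your parametrisation $B=\int_0^1\sqrt{1-(t'(s))^2}\,ds$ together with the concavity/slope-bound analysis actually \emph{proves} these uniqueness claims, and you also spell out the operator-theoretic equivalences (existence of an effect hitting $(0,1)$ $\Leftrightarrow$ disjoint supports; collapse to the diagonal $\Leftrightarrow$ $\varrho_{\PP}=\varrho_{\NN}$) that the paper leaves implicit.
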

\begin{proof}
In the first case, the ROC curve is the
$(0-0)\to(0,1)\to(1,1)$ broken line, which is the only one with zero length
in this metric, while in the latter case it is the $(0,0)\to(1,1)$
line, which is the longest line in the ROC space in this metric.  
\end{proof}

\begin{lemma}
\label{minklemma}
  Given two convex curves, the outer one (that is, which runs above
  the other in the ROC space) is shorter in the Minkowski metics.
\end{lemma}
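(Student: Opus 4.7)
My plan is to reduce the Minkowski length of a ROC curve to a single scalar integral and then appeal to an integral majorization inequality. Parametrize each curve as $\Pr(TP)=g(u)$ with $u=\Pr(FP)\in[0,1]$; the convexity of the feasible ROC region makes $g$ concave with $g(0)=0$ and $g(1)=1$. A direct substitution into the coordinates defined in Eq.~\eqref{eq:transf2} gives $d\ell^2 = ds^2-dt^2 = d\Pr(TP)\,d\Pr(FP) = g'(u)\,du^2$, so the Minkowski length collapses to
\[
L \;=\; \int_0^1 \sqrt{g'(u)}\,du.
\]
First I would verify this representation directly from the coordinate change; as a sanity check, in the classical piecewise-linear case of Section~\ref{sec:twobernoullis} it reproduces $\sum_k\sqrt{p_kq_k}$.

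The second step is to translate the geometric hypothesis (curve~1 runs above curve~2) into a statement about derivatives. Since $g_1(x)\ge g_2(x)$ for every $x\in[0,1]$ and $g_1(1)=g_2(1)=1$, I obtain
\[
\int_0^x g_1'(u)\,du \;\ge\; \int_0^x g_2'(u)\,du \quad\text{for all } x\in[0,1],
\]
with equality at $x=1$. Because each $g_i$ is concave, its derivative is already non-increasing, so no rearrangement is needed; the two displays above are then precisely the definition of the continuous Hardy--Littlewood--P\'olya majorization $g_1'\succ g_2'$.

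I would conclude using the Hardy--Littlewood--P\'olya integral inequality: whenever $f_1\succ f_2$, one has $\int \varphi(f_1)\,du\le\int \varphi(f_2)\,du$ for every concave $\varphi$. Applied with $\varphi(x)=\sqrt{x}$, this yields $L_1\le L_2$, as claimed.

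The main obstacle is regularity. The ROC curves need not be smooth: horizontal or vertical segments (for instance at the two points of tangency of the ellipse in Section~\ref{sec:twopure}, or anywhere on the broken-line classical curves) make $g'$ vanish on an interval or diverge on a set of positive measure. This is absorbed by viewing $g'$ as the Radon--Nikodym density of the non-negative measure $dg$ of total mass one; both the majorization relation and the Hardy--Littlewood--P\'olya inequality remain valid at this level of generality. As a fallback calculus-style argument one could instead connect the two curves by the interpolation $g_\lambda=(1-\lambda)g_2+\lambda g_1$, which stays concave with the correct endpoints, and verify by integration by parts that $dL/d\lambda$ reduces to the integral of $(g_1-g_2)\,g_\lambda''/[4(g_\lambda')^{3/2}]$, a non-positive quantity because $g_1-g_2\ge 0$ and $g_\lambda''\le 0$.
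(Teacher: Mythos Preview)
Your argument is correct, and it is genuinely different from the paper's. The paper proves the lemma by an elementary geometric iteration: starting from the midpoint of the inner curve one draws the tangent, replaces the portion of the outer curve between the two intersection points by that straight segment, and observes that in the Minkowski metric the chord is \emph{longer} than the arc it replaces (the reverse triangle inequality for spacelike curves). Repeating at dyadic subdivision points yields a sequence of polygonal curves whose Minkowski lengths are non-decreasing and which converge to the inner curve; hence the original outer curve was shorter. Your route is analytic: you first notice that $d\ell^2 = ds^2-dt^2 = d\Pr(TP)\,d\Pr(FP)$, so that the Minkowski length reduces to $L=\int_0^1\sqrt{g'(u)}\,du$, and then you cast the hypothesis $g_1\ge g_2$ with equal endpoints and concavity as the majorization $g_1'\succ g_2'$ of non-increasing densities, concluding via the Hardy--Littlewood--P\'olya inequality for the concave function $\varphi(x)=\sqrt{x}$. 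The paper's proof is more self-contained and pictorial; yours buys a compact closed formula for $L$ that immediately recovers the classical Bhattacharyya coefficient $\sum_k\sqrt{p_kq_k}$ in the piecewise-linear case, and it plugs directly into a standard inequality, which makes the monotonicity in later propositions almost automatic. One small point worth stating explicitly in your write-up: when the outer curve has a vertical segment at $x=0$ or a horizontal one at $x=1$, the corresponding atom of $dg_1$ contributes nothing to $\int\sqrt{g_1'}\,dx$, and the remaining absolutely continuous part is only \emph{weakly} majorized by $g_2'$ (its total mass is $\le 1$); the HLP inequality still gives the desired direction because $\sqrt{\cdot}$ is concave and non-decreasing, or equivalently because your fallback interpolation argument goes through verbatim at the level of measures.
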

\begin{proof}
  Consider the midpoint of the inner (lower) curve, and draw a tangent
  to it. This tangent intersects the outer (upper) curve at two
  points. Let us replace the outer curve by the line segment between
  theese two points. Thereby we have increased its length in the
  Minkowski metric. (If the two points are on both of the original
  curves, we skip this step.) Let us repeat this procedure with the
  points at $1/4$ and $3/4$ of the outer curve. By the replacements we
  increase the length of the outer curve. Upon repeating the procedure
  we get an arbitrarily fine approximation of the inner curve, and
  their Minkowski length shall be equal. As during the procedure we
  have not decreased the Minkowski length, the outer curve should have
  been shorter than the inner one. (Note that we have not even
  exploited the fact that the outer curve was convex.)
\end{proof}

\begin{proposition}
  The quantum Bhattacharyya coefficient is lower than or equal to the
  square root fidelity. It is equal if and only if the two density
  operators commute.
\end{proposition}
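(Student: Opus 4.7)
The plan is to combine the Fuchs--Caves characterization of the quantum fidelity, already invoked in Subsection~\ref{sec:twopure}, with Lemma~\ref{minklemma}. The characterization states that the square root fidelity equals the minimum over POVMs of the classical Bhattacharyya coefficient of the induced distributions, i.e.\ $\sqrt{\mathcal{F}(\varrho_{\PP},\varrho_{\NN})}=\min_{\text{POVM}}\sum_i\sqrt{p_iq_i}$, with $p_i=\tr(E_i\varrho_{\PP})$ and $q_i=\tr(E_i\varrho_{\NN})$. The two ingredients are then the realization of that minimum by a specific POVM, and the ``outer-is-shorter'' comparison of Lemma~\ref{minklemma}.

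For the inequality, I would fix a POVM $\mathcal{M}^*$ achieving the above minimum. Applying to $\mathcal{M}^*$ the construction of Section~\ref{sec:twobernoullis} (ordering the outcomes in descending $p_i/q_i$ and plotting the two cumulative distributions against each other) produces a piecewise-linear classical optimal ROC curve whose Minkowski length, by Eq.~\eqref{eq:bhattaclass}, is exactly $\sum_i\sqrt{p_iq_i}=\sqrt{\mathcal{F}}$. Each corner of this curve is realized in the quantum problem by $\mathcal{M}^*$ together with a deterministic classical acceptance/rejection rule, and each intermediate point is realized by a probabilistic mixture of two adjacent such rules; hence the whole classical ROC curve lies inside the feasible quantum ROC region, and thus below the quantum optimal ROC curve. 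Lemma~\ref{minklemma} applied to the quantum curve (outer) and the $\mathcal{M}^*$-curve (inner) immediately gives $B(\varrho_{\PP},\varrho_{\NN})\leq\sqrt{\mathcal{F}}$.

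For the easy equality direction, suppose $[\varrho_{\PP},\varrho_{\NN}]=0$ and diagonalize both operators simultaneously. The spectral projectors of $\Lambda=\lambda\varrho_{\PP}-(1-\lambda)\varrho_{\NN}$ are then, for every $\lambda$, sums of the common rank-one projectors grouped according to the sign of $\lambda p_k-(1-\lambda)q_k$. As $\lambda$ sweeps $[0,1]$ the Helstrom-optimal ROC point moves along the classical piecewise-linear ROC curve of the eigenvalue distributions $(p_k)$, $(q_k)$, so the quantum ROC curve literally coincides with that classical ROC curve. Its Minkowski length equals the classical Bhattacharyya coefficient $\sum_k\sqrt{p_kq_k}$, which in the commuting case is $\sqrt{\mathcal{F}}$; hence $B=\sqrt{\mathcal{F}}$.

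The main obstacle is the converse: that $B=\sqrt{\mathcal{F}}$ forces commutativity. For this I would sharpen Lemma~\ref{minklemma} to a strict statement: if the convex outer and inner curves fail to coincide on some sub-arc, then on that arc at least one tangent-replacement step in its proof genuinely changes the outer curve and, because a Minkowski straight segment is strictly longer than any curved timelike path between the same two endpoints, strictly increases the length of the modified outer curve. Then $B=\sqrt{\mathcal{F}}$ forces the quantum optimal ROC curve to coincide with the ROC curve of $\mathcal{M}^*$. Since motion along the quantum ROC curve is parametrized by $\lambda$ via the Helstrom recipe, this coincidence means that the spectral decomposition of $\Lambda(\lambda)=\lambda\varrho_{\PP}-(1-\lambda)\varrho_{\NN}$ is, up to regrouping its eigenspaces, carried out in a single $\lambda$-independent eigenbasis. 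That basis simultaneously diagonalizes $\varrho_{\PP}$ and $\varrho_{\NN}$, giving $[\varrho_{\PP},\varrho_{\NN}]=0$. The delicate point is making the strict version of Lemma~\ref{minklemma} rigorous when the two curves coincide on some pieces and differ on others, which requires localizing the replacement argument to a sub-arc on which the curves are separated.
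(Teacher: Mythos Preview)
Your approach is essentially the same as the paper's: construct the classical ROC polyline associated with the Fuchs--Caves optimal measurement (whose Minkowski length equals $\sqrt{\mathcal{F}}$), observe that it lies inside the feasible quantum ROC region and hence below the quantum optimal ROC curve, and apply Lemma~\ref{minklemma}. The paper invokes the very same observable (your $\mathcal{M}^*$, given explicitly in Eq.~\eqref{eq:fidelitymeasurable}) and the same lemma.

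Where you differ is in rigor on the ``only if'' direction. The paper simply asserts that if the operators do not commute ``the convex curve necessarily goes above its chords'' and cites Lemma~\ref{minklemma}, without noting that the lemma as stated gives only a non-strict inequality. You correctly identify that a strict version is needed and sketch how to localize the tangent-replacement argument to a sub-arc of genuine separation. Your subsequent step---deducing from coincidence of the two curves that $\Lambda(\lambda)$ has a $\lambda$-independent eigenbasis---is plausible but, as you yourself flag, is the genuinely delicate point; the paper does not address it at all. So your proof proposal is at least as complete as the paper's, and more honest about where the remaining work lies.
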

\begin{proof}
  It is known that there exists an optimal projective measurement, for
  which the Bhattacharyya coefficient of the classical measurement
  result is the square root of the fidelity. (In particular, the
  observable is
\begin{equation}
  \label{eq:fidelitymeasurable}
  M = \frac{1}{\sqrt{\varrho_{\NN}}} \sqrt{\sqrt{\varrho_{\NN}} \varrho_{\PP} \sqrt{\varrho_{\NN}}} \frac{1}{\sqrt{\varrho_{\NN}}} ,
\end{equation}
c.f Eq. (13.54) of Ref.~\cite{9780511535048}. This obviously defines
$d+1$ points in the feasible ROC domain, where $d$ stands for the
dimension of the joint support of the two density operators. (It is an
interesting open question whether these points are actually on the
optimal ROC curve.) These points define a sequence of line segments in
the feasible ROC domain, starting from $(0,0)$, ending at
$(1,1)$. It is obviously convex, and goes below the real ROC
curve. If the two density operator commute, then the ROC curve
coincides with this polyline, as the measurement itself becomes
classical. Otherwise the convex curve necessarily goes above its
chords. The statement then follows from Lemma~\ref{minklemma}.
\end{proof}

\begin{proposition}
  The quantum Bhattacharyya coefficient is monotone under
  completely positive maps.
\end{proposition}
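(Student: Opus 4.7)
The plan is to combine the data processing property of POVMs with Lemma~\ref{minklemma}: I will show that any completely positive trace-preserving map $\Phi$ can only push the ROC curve toward the diagonal, so the transformed curve lies below the original in the ROC space, and then invoke the lemma to conclude that its Minkowski length is at least as large. This matches the expected behavior of a similarity measure, in keeping with the monotonicity of fidelity and the antimonotonicity of trace distance under CPTP maps.

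The first step is to establish the inclusion of feasible ROC regions. Given any two-outcome POVM $\{M_{\pp}, M_{\nn}\}$ measured on the transformed pair $(\Phi(\varrho_{\PP}), \Phi(\varrho_{\NN}))$, the Heisenberg-picture adjoint $\Phi^{\ast}$ is completely positive and unital, so $\{\Phi^{\ast}(M_{\pp}), \Phi^{\ast}(M_{\nn})\}$ is a valid POVM on the original Hilbert space. The duality identity
\begin{equation}
\tr\bigl(M_{\pp}\,\Phi(\varrho)\bigr) = \tr\bigl(\Phi^{\ast}(M_{\pp})\,\varrho\bigr)
\end{equation}
then shows that every $(\Pr(FP),\Pr(TP))$ point achievable for the transformed pair is also achievable for the original pair. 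Hence the feasible ROC region of $(\Phi(\varrho_{\PP}),\Phi(\varrho_{\NN}))$ is contained in that of $(\varrho_{\PP},\varrho_{\NN})$.

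Since the optimal ROC curve is the upper boundary of the feasible region (which is convex and includes the $(0,0)\to(1,1)$ diagonal as a lower edge), the inclusion implies that at every false-positive rate the transformed ROC curve lies at or below the original one. Thus the original curve is the \emph{outer} convex curve and the transformed curve is the \emph{inner} one in the sense of Lemma~\ref{minklemma}. The lemma then yields
\begin{equation}
B(\varrho_{\PP},\varrho_{\NN}) \leq B(\Phi(\varrho_{\PP}),\Phi(\varrho_{\NN})),
\end{equation}
which is the desired monotonicity.

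The main obstacle I expect is handling the possibly nonsmooth features of the general ROC curve discussed in the previous subsection --- the discontinuities that can appear when the rank of the Helstrom projector jumps, and the straight segments produced by the convex-hull construction over rank-stratified regions. Checking that Lemma~\ref{minklemma} continues to apply when one or both curves contain such polygonal portions is a point of care; but because both boundaries remain concave arcs above the diagonal with finite Minkowski length, this can be dispatched either by a direct piecewise-linear version of the tangent-replacement argument used in the lemma, or by approximating each boundary uniformly by smooth convex curves and passing to the limit.
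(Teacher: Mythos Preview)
Your proof is correct and follows essentially the same route as the paper: use the Heisenberg-picture adjoint to show that the feasible ROC region can only shrink under the channel, and then invoke Lemma~\ref{minklemma} to compare the Minkowski lengths of the nested upper boundaries. Your version is in fact more careful in making the trace-preserving assumption explicit (so that $\Phi^{\ast}$ is unital and pulls back POVMs to POVMs) and in flagging the piecewise-linear portions of the ROC boundary, points the paper glosses over.
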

\begin{proof}
  In order to see this we need to consider the whole domain in the ROC
  space which is accessible via a POVM in the ROC space (whose upper
  border is the ROC curve). Importantly this is symmetric to the point
  $(1/2,1/2)$: for an arbitrary point in the domain, the classical
  negation of the measurement result yields its reflected
  counterpart. In the Heisenberg picture, a completely positive (CP)
  map transforms a POVM into another POVM. Hence, such a
  transformation cannot lead out of the original accessible domain,
  but some points may become inaccessible. Thus the domain will shrink
  (or remain the same), and it will be of course still convex, as it
  is the ROC region of the new states. Thus the transformed ROC curve
  will be the same or below the upper curve, which has a higher length
  in the Minkowski metric according to Lemma~\ref{minklemma}
\end{proof}

We remark here that the parts of the ROC domain belonging to a given
rank of the projectors themselves are not contracted as described
above, in general. We conjecture that for unital maps, however, they
will be contracted, too.

\section{Unambiguous quantum state discrimination}
\label{sec:unambiguous}

Using the ROC technique we have introduced thus far can be fruitful in
the understanding of unambiguous quantum state discrimination,
too. Consider two arbitrary quantum states, the ``positive'' and
``negative'' one. (They do not need to be pure, neither of common
support.) We are looking for a POVM of 3 operators: $M_{\pp}$ detecting
the positive state with certainty, $M_{\nn}$ the negative one with
certainty, and the third one, $M_?=\hat 1 - M_{\pp} -M_{\nn}$ resulting in the
inconclusive outcome.

The first idea is to consider to merge one of the conclusive outcomes
with the inconclusive one. If the negative one is merged, this results
in an ambiguous discrimination problem, with a TP rate of one,
however. So it has a point in the ROC space, and it should be on the
line $(0,1)\to (1,1)$. If the positve one is merged, the negative
state is identified with certainty, thus we are on the TP axis. This
confirms very transparently the otherwise known fact that it is a
necessary condition for the availability of an unambiguous
discrimination that the ROC domain of the states touches both the
TP axis and the line $(0,1)\to (1,1)$. (Excluding the trivial points
$(0,0)$ and $(1,1)$ of course.)

Next we show that it is a sufficient condition as well. Consider a
pair of states which satisfy this condition. For each of the points
mentoned in the condition, we have a projective measurement, with the
property that one of its elements identifies one of the
states with certainty, as they project onto the other density
operator's null space. The Helstrom-formula provides us with these
projectors. (In fact, we may also use a positive measurement operator
that maps into the respective null-space.) Let us denote these
projectors by $\tilde M_{\pp}$ and $\tilde M_{\nn}$. These projectors can be
derived from the Helstrom formula, too.  Note that they need not be rank
one projectors in general. Their rank is between 1 and $d-1$, $d$
standing for the dimension of the common support of the states, for if
any of them has full rank, the respective measurement will not yield
any information. Hence, if the two density operators have fully
overlapping supports, the states cannot be unambiguously discriminated.

If we do have nontrivial $\tilde M_{\pp}$ and $\tilde M_{\nn}$, there
exist $\lambda_1, \lambda_2 \in ]0,1]$ so that $\lambda_1 \tilde
M_{\pp} + \lambda_2 \tilde M_{\nn} \leq \hat 1$
E.g. $\lambda_1=\lambda_2=1/2$ is a suitable choice. The choice
of the $\lambda$-s can tune the success probability of the detection
in favor one of the states and to the detrement of the other. Choosing
$\tilde M_? = \hat 1-\tilde M_{\pp} -M_{\nn}$ completes the two
operators the triplet $\tilde M_{\pp}, \tilde M_{\nn}, \tilde M_?$
which is a POVM that implements the unambiguous discrimination, not
necessarily optimally though.

\section{Conclusions}
\label{sec:conclusions}

We have presented the ROC analysis of the task of distinguishing two
quantum states, providing an intuitive representation of the
problem. The trace distance of two arbitrary states, and the fidelity
of a pair of states with a two-dimensional common support can be
direcly read out from the ROC diagram. We have also discussed to some
extent the case of unambiguous discrimination in this picture.

We have introduced a quantum generalization of the classical
Bhattacharyya coefficient as the length in Minkowski metric of the
optimal ROC curve for two arbitrary states. While it is a natural
generalization of the classical Bhattacharyya coefficient as evaluated
using the ROC curve, it has a notion which is different from the
earlier definitions of the quantum Bhattacharyya coefficient. We have
shown that this quantity is a lower bound for the fidelity, with
equality if and only if the states commute. As it is zero for
distinguishable states whereas it is one if the two states are the
same, it can be used to measure indistinguishability of quantum
states. In addition, it is monotone under completely positive maps.

Even though the facts covered by our present analysis are not novel,
they appear here in a comprehensive picture which supports a useful
way of thinking about the problem. We belive that this deepens the
understanding of the nature of the problem of distinguishing quantum
states. In addition, the present approach may prove to be useful in
understanding other strategies of quantum state discrimination as well
as more complex topics such as the discrimination of quantum channels,
etc.

The alternative quantum Bhattacharyya coefficient which we have
introduced is a natural generalization of the classical one, and even
though it is not straightforward to calculate it for particular
states, its notion and behavior is very transparent from the knowledge
of the ROC curve.  In addition, we found that it has properties which
are commonly expected from a well-behaved quantum distinguishability
measure.

\section*{Acknowledgements}
We thank Lajos Di\'osi for useful discussions and P\'eter Gn\"adig for
his hint to proove Lemma~\ref{minklemma}. A.~B. thanks Tam\'as Geszti
and Istv\'an Csabai for their support.

\section*{References}

\begin{thebibliography}{1}
\expandafter\ifx\csname url\endcsname\relax
  \def\url#1{{\tt #1}}\fi
\expandafter\ifx\csname urlprefix\endcsname\relax\def\urlprefix{URL }\fi
\providecommand{\eprint}[2][]{\url{#2}}

\bibitem{QSD}
Bergou J~A, Herzog U and Hillery M 2004 Discriminatin of quantum states {\em
  Quantum State Estimation\/} Lecture Notes in Physics ed Paris M~G~A and {\v
  R}eha{\v c}ek J (Springer) chap~11, pp 417--462

\bibitem{1751-8121-48-8-083001}
Bae J and Kwek L~C 2015 {\em Journal of Physics A: Mathematical and
  Theoretical\/} {\bf 48} 083001

\bibitem{Fawcett:2006:IRA:1159473.1159475}
Fawcett T 2006 {\em Pattern Recogn. Lett.\/} {\bf 27} 861--874 ISSN 0167-8655

\bibitem{MR0010358}
Bhattacharyya A 1943 {\em Bull. Calcutta Math. Soc.\/} {\bf 35} 99--109 ISSN
  0008-0659

\bibitem{9780511535048}
Bengtsson I and {\. Z}yczkowski K 2006 {\em Geometry of Quantum States\/}
  (Cambridge University Press) ISBN 9780511535048 

\bibitem{1407.3739}
Spehner D 2014 {\em Quantum correlations and distinguishability of quantum states}
  (\textit{Preprint} \eprint{arXiv:1407.3739})

\bibitem{Helstrom}
Helstrom C~W 1976 {\em Quantum detection and estimation theory\/} ({\em
  Mathematics in Science and Engineering\/} vol 123) (New York Academic Press)

\bibitem{PhysRevLett.76.2818}
Barnum H, Caves C~M, Fuchs C~A, Jozsa R and Schumacher B 1996 {\em Phys. Rev.
  Lett.\/} {\bf 76}(15) 2818--2821

\end{thebibliography}
\providecommand{\newblock}{}

\end{document}